

\documentclass[12pt]{article}

\usepackage{setspace}

 \usepackage[dvips]{graphicx}
 \usepackage{times,amsfonts,amsmath}
 \usepackage{psfrag,ifthen}
 \usepackage{graphics}
 \usepackage{mychicago} 

%
\usepackage{float}
 \setlength{\textwidth}{160mm} 
 \setlength{\textheight}{247mm} 
 \setlength{\hoffset}{-25mm} 
 \setlength{\voffset}{-25mm} 
 \setlength{\oddsidemargin}{25mm}
 \setlength{\evensidemargin}{25mm}
 \setlength{\topmargin}{15mm}
 \setlength{\headheight}{5mm}
 \setlength{\headsep}{5mm}
 \setlength{\topskip}{0mm}
 \setlength{\footskip}{5mm}
 \setlength{\parindent}{0mm}
 \setlength{\parskip}{1.5ex}
 \setlength{\abovecaptionskip}{6pt}
 \setlength{\belowcaptionskip}{6pt}

\makeatletter

\renewcommand\section{\@startsection {section}{1}{\z@}
                                   {-2.5ex \@plus -1ex \@minus -.2ex}
                                   {1.ex \@plus.2ex}
                                   {\normalfont\bfseries\uppercase}}
\renewcommand\subsection{\@startsection{subsection}{2}{\z@}%
                                   {-2.5ex \@plus -1ex \@minus -.2ex}
                                   {1.ex \@plus.2ex}
                                   {\normalfont\bfseries}}

\doublespacing

\newcommand{\R}{{I\!\!R}}

\def\R{{\rm I}\! {\rm R}}

\def\X{{\bf X}}

\def\pt{\frac{\partial}{\partial t}}

\renewcommand{\vec}[1]{\mbox{\boldmath $ #1 $}}

\newtheorem{algorithm}{Algorithm}
\newtheorem{remark}{Remark}
\newtheorem{theorem}{Theorem}
\newtheorem{proof}{Proof}


\makeatother


\begin{document}
 \vspace*{12pt} 

\begin{center}
 \textbf{\large{Simulation of a Heat Transfer in Porous Media}}

 \vspace*{24pt}

 \normalsize{J. GEISER$^1$}\\

 \vspace*{12pt}

 $^1$\textit{\normalsize{EMA University of Greifswald, Institute of Physics, Felix-Hausdorff-Str. 6, D-17489 Greifswald, Germany}}\\
 \textit{\normalsize{Email: juergen.geiser@uni-greifswald.de}}\\

  \vspace*{12pt}

\end{center}

 \vspace*{12pt}

 \vspace*{12pt}
{\bf ABSTRACT}

We are motivated to model a heat transfer to 
a multiple layer regime and their optimization 
for heat energy resources. Such a problem can be modeled
by a porous media with different phases (liquid and solid).

The idea arose of a geothermal energy reservoir 
which can be used by cities, e.g. Berlin.

While hot ground areas are covered to most high populated cites,
the energy resources are important and a shift to 
use such resources are enormous.

We design a model of the heat transport via the flow of water through the
heterogeneous layer of the underlying earth sediments.

We discuss a multiple layer model, based on mobile and immobile zones.

Such numerical simulations help to economize on expensive 
physical experiments and obtain control mechanisms for the delicate 
heating process.

{\bf Keywords}: Multiple Layer Regime, Multiple phase model, convection-diffusion reaction equations.\\

{\bf AMS subject classifications.} 35K25, 35K20, 74S10, 70G65.

\section{Introduction}

We motivate our research on simulating novel energy resources
in geothermic. 

The heat transfer in permeable and non-permeable layers are models
and we simulate the temperatures in the different layers.

Such simulations allow to predict possible energy resources to
geothermal reservoirs.

For such processes, we present a multi phase and multi-species model, 
see \cite{geiser_book_09}.

The solver methods are fast Runge--Kutta solvers,
whereas the mobile terms are convection--diffusion equations and 
are solved with splitting semi-implicit finite volume methods and characteristic methods, \shortcite{gei_06}.

Such a sequential treatment of the partial differential
equations and ordinary differential equations allow of saving computational
time, while expensive implicit Runge--Kutta methods are reduced to 
the partial operators and fast explicit Runge--Kutta methods are 
for the ordinary operators of the multi phase model.

With various source terms we control the required concentration 
at the final temperature area.

\noindent This paper is outlined as follows.

In Section \ref{modell}, we present our mathematical model 
based on the multiphases.
In Section \ref{disc}, we discuss discretization 
and solver methods with respect to their efficiency and accuracy.
The splitting schemes are discussed in Section \ref{oper}.
The numerical experiments are given in Section \ref{num}.
In Section \ref{concl}, 
we briefly summarize our results.

\section{Mathematical modeling}
\label{modell}

In the model we have included the following 
multiple physical processes, related to the deposition process:
\begin{itemize}
\item Flow field of the fluid: Navier--Stokes equation
\item Transport system of the species:  mobile and immobile phases
\end{itemize}

In the following we discuss the three models separately and combine all 
the models into a multiple physical model.
We assume a two-dimensional domain of the apparatus with isotropic flow fields, see \cite{gobb96}.

\subsection{Flow field}

The conservation of momentum is given by (flow field: Navier--Stokes equation)
\begin{eqnarray}
\label{scalar}
&&  \pt {\bf v} + {\bf v} \cdot \nabla {\bf v}  =  - \nabla p , \; \mbox{in} \; \Omega \times [0, t] \\
&& {\bf v}({\bf x}, t) = {\bf v}_0({\bf x}) ,\;  \mbox{on} \; \Omega , \label{tic}\\ 
&& {\bf v}({\bf x}, t) = {\bf v}_1({\bf x},t) , \;  \mbox{on} \; \partial \Omega  \times [0, t] \label{tbc} ,
\end{eqnarray}
where ${\bf v}$ is the velocity field, $p$ the pressure, ${\bf v}_0$ the initial velocity field and the position vector ${\bf x} = (x_1, x_2)^t \in \Omega \subset \R^{2,+}$.
Furthermore, we assume that the flow is divergence free and the pressure 
is pre-defined.

\subsection{Transport systems (multi phase equations)}
 
We model the heat transfer as 
an underlying medium in the earth layers with mobile and immobile phases.
Here heat transport in the fluid with different species contain
of mobile and immobile concentrations. 
For such a heterogeneous media, we applied our expertise in
modeling multiphase transport through a porous medium.

\begin{figure}[H]
\begin{center} 
\includegraphics[width=14.0cm,angle=-0]{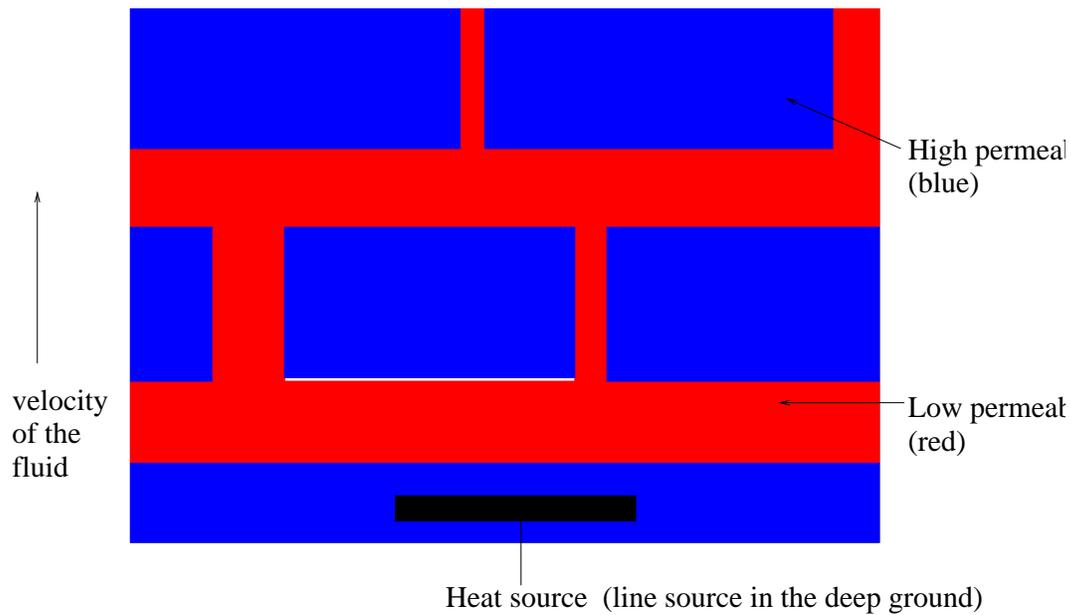}
\end{center}
\caption{\label{part_0} Multiple layer regime of the underlying rocks and earth layers.}
\end{figure}

In the model, we consider both absorption and adsorption 
taking place simultaneously and with given exchange rates.
Therefore we consider the effect of the gas concentrations' being 
incorporated into the porous medium. \\
We extend the model to two more phases:
\begin{itemize}
\item Immobile phase 
\item Adsorbed phase
\end{itemize}

In Figure \ref{part_0_b}, the mobile and immobile phases of the
gas concentration are shown in the macroscopic scale of the porous medium.
Here the exchange rate between the mobile gas concentration and
the immobile gas concentration control the flux to the medium.
\begin{figure}[H]
\begin{center} 
\includegraphics[width=8.0cm,angle=-0]{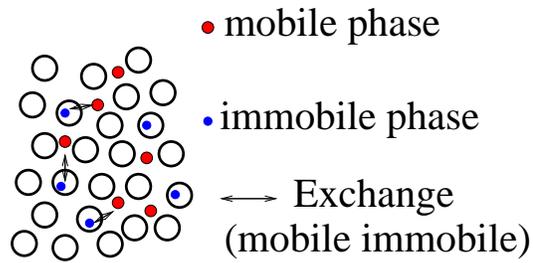}
\end{center}
\caption{\label{part_0_b} Mobile and immobile phase.}
\end{figure}

In Figure \ref{part_0_c}, the mobile and adsorbed phases of the
gas concentration are shown in the macroscopic scale of the porous medium.
To be more detailed in the mobile and immobile phases, where the 
gas concentrations can be adsorbed or absorbed, we consider a further phase. 
Here the adsorption in the mobile and immobile phase is treated as a
retardation and given by a permeability in such layers. 
\begin{figure}[H]
\includegraphics[width=10.0cm,angle=-0]{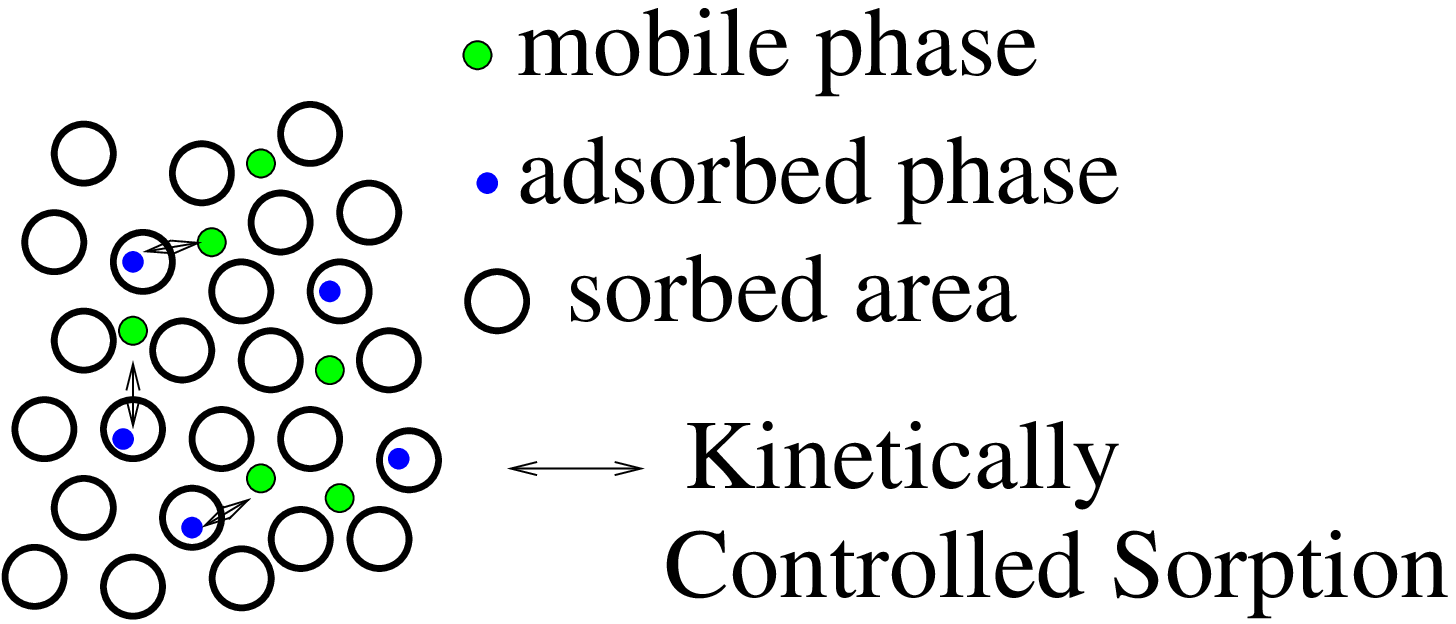}
\hspace{-1.5cm}\includegraphics[width=10.0cm,angle=-0]{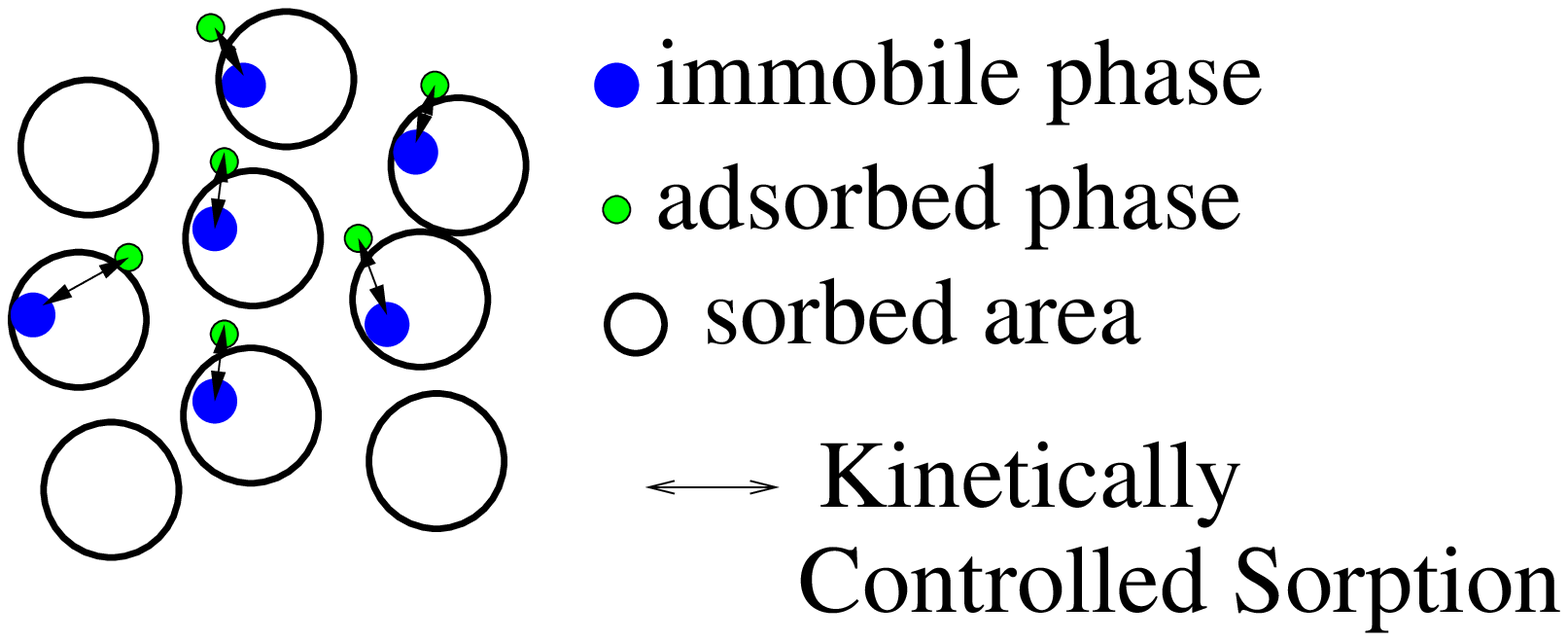}
\caption{\label{part_0_c} Mobile-adsorbed phase and immobile-adsorbed phase.}
\end{figure}

The model equation for the multiple phase equations are 
\begin{eqnarray}
\label{mobile1}
&& \phi \partial_t T_i  + \nabla \cdot {\bf F}_i  =  g (- T_{i}  + T_{i,im} ) + k_{\alpha} (- T_{i}  + T_{i,ad} ) \nonumber \\
&& - \lambda_{i,i} \phi T_i  + \sum_{k=k(i)} \lambda_{i,k} \phi T_k  + \tilde{Q_i} ,  \; \mbox{in} \; \Omega \times [0, t] , \\
&& {\bf F}_i  = {\bf v} T_i  - D^{e(i)} \nabla T_i , \\
\label{immobile1}
&& \phi \partial_t T_{i,im}   =  g (T_{i}  - T_{i,im} ) + k_{\alpha} (T_{i,im,ad}  - T_{i,im} ) \nonumber \\
&& - \lambda_{i,i} \phi T_{i,im}  + \sum_{k=k(i)} \lambda_{i,k} \phi T_{k,im}  
 + \tilde{Q_{i,im}} ,  \; \mbox{in} \; \Omega \times [0, t] , \\
\label{adsorpt1}
&& \phi \partial_t T_{i,ad}   =  k_{\alpha} (T_{i}  - T_{i,ad} ) - \lambda_{i,i} \phi T_{i,ad}  + \sum_{k=k(i)} \lambda_{i,k} \phi T_{k,ad}  
 + \tilde{Q_{i,ad}} ,  \; \mbox{in} \; \Omega \times [0, t]  , \\
\label{adsorpt1_immo}
&& \phi \partial_t T_{i,im,ad}   =  k_{\alpha} (T_{i,im}  - T_{i,im,ad} ) \nonumber \\
&& - \lambda_{i,i} \phi T_{i,im,ad}  + \sum_{k=k(i)} \lambda_{i,k} \phi T_{k,im,ad}  
 + \tilde{Q_{i,im,ad}} , ,  \; \mbox{in} \; \Omega \times [0, t]  ,\\
&& T_i ({\bf x}, t) = c _{i,0}({\bf x}) , T_{i,ad} ({\bf x}, t) = 0 , T_{i,im} ({\bf x}, t) = 0 ,  T_{i,im,ad} ({\bf x}, t) = 0 ,  \;  \mbox{on} \; \Omega , \label{tic}\\ 
&& T_i({\bf x}, t) = T_{i,1}({\bf x},t) , T_{i,ad}({\bf x}, t) = 0 , T_{i,im}({\bf x}, t) = 0 ,  T_{i,im,ad}({\bf x}, t) = 0 , \;  \mbox{on} \; \partial \Omega  \times [0, t] \label{tbc} ,
\end{eqnarray}
where the initial value is given as $T_{i,0}$ and we assume
a Dirichlet boundary conditions with the function $T_{i,1}({\bf x},t)$ sufficiently smooth, all other initial and boundary conditions of the other phases are zero.
\begin{eqnarray*}
\phi :&& \mbox{effective porosity} \; [-], \nonumber \\
 T_i: && \mbox{temperature of the }i\mbox{th species in the underlying rock}
 \nonumber \\
 T_{i,im}: && \mbox{temperature of the }i\mbox{th species in the immobile zones of the rock}
 \nonumber \\
        && \mbox{phase} \; [K / m^3] , \nonumber \\
T_{i,ad}: && \mbox{temperature of the }i\mbox{th species in the adsorbed zones of the rock}
 \nonumber \\
        && \mbox{phase} \; [K / m^3] , \nonumber \\
T_{i,im,ad}: && \mbox{temperature of the }i\mbox{th species in the immobile adsorbed zones of the rock}
 \nonumber \\
        && \mbox{phase} \; [K / m^3] , \nonumber \\
{\bf v }: && \mbox{velocity through the rock and porous substrate \cite{rouch06}}  \; [cm/h]] , \nonumber \\
D^{e(i)} : && \mbox{element-specific diffusion-dispersions tensor} \; [m^2
 / h]]  , \nonumber \\
\lambda_{i,i} : && \mbox{decay constant of the }i\mbox{th species} \; [1 / h]] , \nonumber \\
\tilde{Q}_i : && \mbox{source term of the }i\mbox{th species} \; [K / (m^3 h)]  , \nonumber  \\
g  : && \mbox{exchange rate between the mobile and immobile concentration}  \; [1 / h] , \nonumber \\
k_{\alpha} : && \mbox{exchange rate between the mobile and adsorbed concentration or immobile and} \\
 && \mbox{immobile adsorbed concentration (kinetic controlled sorption)}  \; [1 / h] , \nonumber
\end{eqnarray*}
with $i = 1 , \ldots , M $ and $M$ denotes the number of components.

The parameters in (\ref{mobile1}) are further described, see also \cite{gei_diss03}. \\
The four phases are treated in the full domain, such that we have 
a full coupling in time and space.
 
The effective porosity is denoted by $\phi$ and declares the 
portion of the porosities of the aquifer that is filled with solid grain,
and we assume a nearly solid phase. 
The transport term is indicated by the Darcy velocity ${\bf v}$, 
that presents the flow direction and the absolute value of the heat flux. 
The velocity field is divergence free. The decay constant of the $i$th
species is denoted by $\lambda_i$.
Thereby, $k(i)$ denotes the indices of the other species.

\section{Discretization and solver methods}
\label{disc}

We first discretize the underlying flow and transport equations 
in space with finite volume methods, 
while we then apply the time integration methods,
e.g. Runge-Kutta schemes.

\subsection{Notation}

The time-steps for the calculation in the time-intervals are
$(t^n, t^{n+1}) \subset (0,T)$ , for $n = 0,1, \ldots $. The 
computational cells are given as 
$\Omega_j \subset \Omega$ with $j = 1, \ldots, I$. 
The unknown $I$ is the number of the nodes.

For the application of finite-volumes we have to construct 
a dual mesh for the triangulation {$\cal T$ }, for the domain $\Omega$.
First the finite-elements 
for the domain $\Omega$ are given by $T^e, e = 1 , \ldots , E$.
The polygonal computational cells $\Omega_j$ are related to the 
vertexes $x_j$ of the triangulation. 

The notation for the relation between the neighbor cells and the 
concerned volume of each cell is given in the following notation. \\
Let $V_j = |\Omega_j|$ and the set $\Lambda_j$ denote the
neighbor-point $x_k$ to the point $x_j$. The boundary of the cell $j$ 
and $k$ is denoted as $\Gamma_{jk}$.

We define the flux over the boundary $ \Gamma_{jk}$ as
\begin{eqnarray}\label{eq43_d}
 v_{jk} = \int_{\Gamma_{jk}} {\bf n} \cdot {\bf v} \; ds \; . 
\end{eqnarray}
The inflow-flux is given as $v_{jk} < 0$, and the outflow-flux is 
$v_{jk} > 0$. The antisymmetry of the fluxes is denoted as
$v_{jk} = - v_{kj}$. The total outflow-flux is given as
\begin{eqnarray}\label{eq43_e}
 \nu_j = \sum_{k \in out(j)} v_{jk} .
\end{eqnarray}

The idea of the finite-volumes is to construct an algebraic system
of equation to express the unknowns $c_j^n \approx c(x_j,t^n)$.
The initial values are given by $c_j^0$.
The expression of the interpolation schemes can be given naturally
in two ways:
the first possibility is given with the primary mesh of the 
finite-elements
\begin{eqnarray}\label{eq43_f}
c^n = \sum_{j=1}^I c_j^n \phi_j(x)
\end{eqnarray}
where $\phi_j$ are the standard globally-finite element basis functions
\cite{fro_gei02}.
The second possibility is given with the dual mesh of 
the finite volumes with,
\begin{eqnarray}\label{eq43_g}
\hat{c}^n = \sum_{j=1}^I c_j^n \varphi_j(x)
\end{eqnarray}
where $\varphi_j$ are piecewise constant discontinuous functions
defined by $\varphi_j(x) =1$ for $x \in \Omega_j$ and $\varphi_j(x)=0$
otherwise.

\subsection{Discretization of the Transport equation}

We deal with the transport part, see (\ref{mobile1}):
\begin{eqnarray}
\label{scalar_3}
&& R_i \pt c_i + \nabla \tilde{F_i} = 0 , \; \mbox{in} \; \Omega \times [0, t] \\
&& \tilde{F_i} = {\bf v} c_i - D^{e(i)} \nabla c_i,\nonumber \\
&& c_i(x, t) = c_{i,0}(x) ,\;  \mbox{on} \; \Omega , \label{tic}\\ 
&& c_i(x, t) = c_{i,1}(x,t) , \;  \mbox{on} \; \partial \Omega  \times [0, t] \label{tbc} ,
\end{eqnarray}

For the convection part, we use a piecewise constant 
finite volume method with upwind discretization, see \cite {fro_gei02}.
For the diffusion-dispersion part, we also apply a finite volume
method and we assume the boundary values are denoted by ${\bf n} \cdot D^{e(i)} \; \nabla c_i(x, t) = 0 $, 
where $x \in \Gamma$ is the boundary $\Gamma = \partial \Omega$, 
cf.\ \cite {fro02}. The initial conditions are given by $c_i(x,0) = c_{i,0}(x)$.

We integrate (\ref{scalar_3}) over space and obtain
\begin{eqnarray}
\label{gleich_kap2_20}
\int_{\Omega_j} R_i \pt c_i \; dx = \int_{\Omega_j} \nabla \cdot (- {\bf v} c_i + D^{e(i)} \nabla c_i) \; dx \; .
\end{eqnarray}
The time integration is done later in the decomposition method
with implicit--explicit Runge--Kutta methods. Further the diffusion-dispersion term is lumped, cf.\ \cite {gei_diss03}
Eq.\ (\ref{gleich_kap2_20}) is discretized over
 space using Green's formula.
\begin{eqnarray}
\label{gleich_kap2_21_1}
V_j  R_i \pt c_i \; dx =  \int_{\Gamma_j}
  {\bf n} \cdot (- {\bf v} c_i + D^{e(i)} \nabla c_i) \; d\gamma \; ,
\end{eqnarray}
where $\Gamma_j$ is the boundary of the finite volume 
cell $\Omega_j$ and $V_uj$ is the volume of the cell $j$. 
We use the approximation in space, see \cite {gei_diss03}.

The spatial integration for the diffusion part (\ref{gleich_kap2_21_1}) 
is done by the mid-point rule over its finite boundaries and the
convection part is done with a flux limiter and we obtain:
\begin{eqnarray}
\label{gleich_kap2_23}
V_j R_i \pt c_{i,j}  = \sum_{e \in \Lambda_j} {\bf n}^e  \nabla {\bf v} c_{i}^{e} d \gamma \;  + \;  \sum_{e \in \Lambda_j} \sum_{k \in \Lambda_j^e} |\Gamma_{jk}^e|
{\bf n}_{jk}^e \cdot D_{jk}^e \nabla c_{i,jk}^{e}   \; , 
\end{eqnarray}
where $|\Gamma_{jk}^e|$ is the length of the boundary element $\Gamma_{jk}^e$.
The gradients are calculated with the piecewise finite-element function
$\phi_l$.

We decide to discretize the ﬂux with an up-winding scheme
and obtain the following discretization for the convection part:
\begin{eqnarray}
F_{j, e} = \left\{
\begin{array}{c c}
{\bf v}_{j,e} c_{i,j} & \mbox{if} \; v_{j,e} \ge 0 , \\
{\bf v}_{j,e} c_{i,k} & \mbox{if} \; v_{j,e} < 0 ,
\end{array}
\right.
\end{eqnarray}
where $v_{j,e} = \int_e {\bf v} \cdot {\bf n}_{j,e} ds $.

We obtain for the diffusion part:
\begin{eqnarray}
\label{gleich_kap2_22_2}
\nabla c_{jk}^{e} = \sum_{l \in \Lambda^e} c_l \nabla
\phi_l({\bf x}_{jk}^e) \; .
\end{eqnarray}

We get, using difference notation for the 
neighbor points $j$ and $l$, cf.\ \cite {frodesch01},
the full semi-discretization: 
\begin{eqnarray}
\label{gleich_kap2_24}
&& V_j R_i \pt c_{i, j} =    \sum_{e \in \Lambda_j} \; F_{j, e} + 
 \sum_{e \in \Lambda_j} \;
  \sum_{l \in \Lambda^e \backslash \{j\}} \; \Big( \sum_{k \in \Lambda_j^e}
|\Gamma_{jk}^e| {\bf n}_{jk}^e \cdot D_{jk}^e \nabla \phi_l ({\bf x}_{jk}^e) \Big) (c_j -
c_l)  \; , \nonumber
\end{eqnarray}
where $j = 1, \ldots, m$.

\begin{remark}
For higher order discretization of the convection equation,
we apply a reconstruction which is based on Godunov's method.
We apply a limiter function that fulfills the local min--max property.
The method is explained in \cite {fro_gei02}.
The linear polynomials are reconstructed by the element-wise gradient
and are given by
\begin{eqnarray}
\label{gleich_kap2_16} 
&& u(x_j) = c_j \; , \\
&& \nabla u |_{V_j} = \frac{1}{V_j} \sum_{e = 1}^E  \int_{T^e \cap \Omega_j}
 \nabla c dx \; , \\
&&\mbox{with} \quad j = 1, \ldots, I \; . \nonumber
\end{eqnarray}
The piecewise linear functions are denoted by
\begin{eqnarray}
\label{gleich_kap2_19} 
&& u_{jk} = c_j + \psi_{j} \nabla u |_{V_j} ( x_{jk} - x_j) \; , \\
&& \mbox{with} \quad j = 1, \ldots, I \; , \nonumber 
\end{eqnarray}
where $\psi_j \in (0,1)$ is the limiter function and based on this,
(\ref{gleich_kap2_19}) fulfills the 
discrete minimum maximum property, as described in \cite {fro_gei02}.

\end{remark}

\subsection{Discretization of the source-terms}

The source terms are part of the convection-diffusion equations 
and are given as follows:
\begin{eqnarray}
\label{eq20_source}
&& \partial_t c_i(x,t) - {\bf v} \cdot \nabla c_i + \nabla D \nabla c_i  = q_i(x, t)
\; ,
\end{eqnarray}
where $i = 1, \ldots, m$, ${\bf v}$ is the velocity, $D$ is the diffusion
tensor and $q_i(x,t)$
are the source functions, which can be
point wise, linear in the domain.

The point wise sources are given as :
\begin{eqnarray}
\label{eq20_source_2}
&& q_i(t) = \left\{ 
\begin{array}{c c}
\frac{q_{s,i}}{T}  & t \le T , \\
0 & t > T ,
\end{array} \right. ,
\mbox{with} \int_T q_i(t) dt = q_{s,i} ,
\end{eqnarray}
where $q_{s,i}$ is the concentration of species $i$
at source point $x_{source, i} \in \Omega$ over the whole time-interval.

The line and area sources are given as :
\begin{eqnarray}
\label{eq20_source_2}
&& q_i(x,t) = \left\{ 
\begin{array}{c c}
\frac{q_{s,i}}{T |\Omega_{source, i}|},   & t \le T \; \mbox{and} \; x \in \Omega_{source, i}, \\
0, & t > T ,
\end{array} \right. , \\
&& \mbox{with} \int_{\Omega_{source, i}} \int_T q_i(x, t) dt dx = q_{s,i},  \nonumber
\end{eqnarray}
where $q_{s,i}$ is the source concentration of species $i$
at the line or area of the source over the whole time-interval.

For the finite-volume discretization we have to compute :
\begin{eqnarray}
\label{gleich_source_1}
\int_{\Omega_{source,i,j}} q_i(x,t) \; dx =  \int_{\Gamma_{source,i,j}} {\bf n} \cdot ( {\bf v} c_i - D \nabla c_i )\; d\gamma \; ,
\end{eqnarray}
where $\Gamma_{source,i,j}$ is the boundary of the finite-volume 
cell $\Omega_{source,i,j}$ which is a source area.
We have $\cup_{j}\Omega_{source,i,j} = \Omega_{source,i}$ where $j \in I_{source}$, where $I_{source}$ is the set of the finite-volume cells that includes
the area of the source. 

The right-hand side of (\ref{gleich_source_1})  is also called the flux of the sources \cite{fro02_b}.

\subsection{Discretization of the Navier-Stokes equation}

We deal with the following Navier-Stokes equation:
\begin{eqnarray}
\label{scalar}
&&  \pt {\bf v} + {\bf v} \cdot \nabla {\bf v}  =  - \nabla p , \; \mbox{in} \; \Omega \times [0, t] \\
&& \nabla \cdot {\bf v} = 0 ,
\end{eqnarray}
where ${\bf v} = (v_1, v_2)^t$, for simplicity we have normalized with $\rho = 1$,
and $p$ is the pressure which is predefined.

For the time discretization, we use the explicit Euler method given by:
\begin{eqnarray}
\label{scalar}
&&  {\bf v}^{n+1} =  {\bf v}^{n} - \Delta t  {\bf v}^n \cdot \nabla {\bf v}^n  - \Delta t \nabla p^n , \; \mbox{in} \; \Omega \\
&& \nabla \cdot {\bf v}^n = 0 ,
\end{eqnarray}
where $\Delta t$ is the local time step.

For the spatial discretization, we apply finite volume methods on staggered grids
and discretize in each direction of the 2D Cartesian grid.
The convection term in the $v_1$-momentum equation is given by,  see \cite{piller04}:
\begin{eqnarray}  
\label{gleich_kap2_10}
\int_{V_h} v_1 \nabla \cdot {\bf v} \; dV = \int_{S_h} v_1  {\bf v} {\bf n} \; dS ,
\end{eqnarray}
where $V_h$ is the control volume with grid size $h$ and $S_h$ is the underlying boundary.
We integrate over each face of the finite volume respecting the direction 
of the normal vector, see \cite{piller04} and next subsection.

The same procedure is also used for the convection term in the $v_2$ momentum equation.

\subsection{Time discretization methods}

We deal with higher order time-discretization methods.
We apply the Runge-Kutta methods as time-discretization methods 
to reach higher order results.

Based on the spatial discretized transport or flow
equations we obtain the following equations:
\begin{equation}
\begin{array}{c}
{\displaystyle \partial_t c(t) = A c(t) + B c(t) + f(t), \quad 0 < t
\leq T } \; ,  \\
\noalign{\vskip 1ex} {\displaystyle c(0)=c_0 \; , }
\end{array} \label{eq:ACP}
\end{equation}
where $A$ is the stiffness operator and $B$ is the reaction operator 
for the transport equations. $f(t)$ is the right hand side, e.g.
source term of the equations.

For such a system of ordinary differential equations, we apply the
Runge-Kutta methods.

{\bf Runge-Kutta method}

We use the implicit trapezoidal rule:
\begin{eqnarray}
\label{num_8}
\begin{array}{c | c c }
0 &  &    \\[0.05cm]
 1  & \frac{1}{2} & \frac{1}{2} \\[0.05cm]
\hline
    & \frac{1}{2} & \frac{1}{2}
\end{array}
\end{eqnarray}

\begin{remark}
We apply also higher order Runge-Kutta schemes.
Based on the spatial discretisation method, which is second order
finite volume schemes, we obtain the
best results with second order RK schemes.
\end{remark}

\section{Splitting methods}
\label{oper}

In the following, we discuss splitting methods to decouple 
the system of differential equations to simpler parts and
accelerate the solver process. \\
We concentrate on two ideas:
\begin{itemize}
\item Additive Splitting schemes ,
\item Iterative Splitting schemes .
\end{itemize}

\subsection{Additive Splitting schemes}

We deal with the following equation:

\begin{eqnarray}
\label{equ1}
&& \sum_{\beta = 1}^p B_{\alpha \beta} \partial_t \; u_{\beta} = \sum_{\beta= 1}^p A_{\alpha \beta} u_{\beta} + f_{\alpha}, \; \alpha= 1, 2, \ldots, p , \\
&& \; u_{\alpha}(0) = u_{\alpha,0} , \; \alpha= 1, 2, \ldots, p .
\end{eqnarray}

Further we assume $A$ and $B$ are self-adjoint.

We apply the discretization with the schemes of weights and obtain:
\begin{eqnarray}
\label{equ1_splitting}
&&  B  \frac{u^{n+1}-u^n}{\tau} - A (\sigma u^{n+1} + (1 - \sigma) u^n) = \phi^n , \\
&& \phi^n = f (\sigma t^{n+1} + (1 - \sigma) t^n) ,
\end{eqnarray}

By the transition to a new time level, we require:
\begin{eqnarray}
\label{equ1}
&&  (B - A \sigma \tau ) u^{n+1} = \phi^n ,
\end{eqnarray}
the original problem can be transferred to
\begin{eqnarray}
\label{equ1}
&& \sum_{\beta=1}^p (B_{\alpha \beta} - A_{\alpha \beta} \sigma \tau ) u_{\beta}^{n+1} = \phi^n_{\alpha} , \; \alpha = 1, 2, \ldots, p .
\end{eqnarray}

By the conduction to a sequence of simpler problems we 
\begin{eqnarray}
\label{equ1}
&& (B_{\alpha \alpha} - \frac{1}{2} A_{\alpha \alpha} \sigma \tau ) u_{\beta}^{n+1/2} = \tilde{\psi}^n_{\alpha} , \; \alpha = 1, 2, \ldots, p , \\
&& (B_{\alpha \alpha} - \frac{1}{2} A_{\alpha \alpha} \sigma \tau ) u_{\beta}^{n+1} = \hat{\psi}^n_{\alpha} , \; \alpha = 1, 2, \ldots, p , \\
\end{eqnarray}

Here we have the benefit to invert only the diagonal parts of the 
matrices and use the idea to solve the triangular splitting of the operator $A = A_1 + A_2$.

\smallskip
\begin{theorem}

If we choose $\sigma \ge \frac{1}{2}$, then the splitting scheme (\ref{equ1_splitting})
is absolute stable in an appropriate Hilbert space.
\end{theorem}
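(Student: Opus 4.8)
The plan is to run the energy method for two–layer operator–difference schemes (the Samarskii framework). First I would put the weighted scheme (\ref{equ1_splitting}) into canonical form: writing $u^{(\sigma)}:=\sigma u^{n+1}+(1-\sigma)u^n=u^n+\sigma(u^{n+1}-u^n)$ and $u_t:=(u^{n+1}-u^n)/\tau$, the scheme becomes $(B-\sigma\tau A)\,u_t-A u^n=\phi^n$. Since the stiffness operator $A$ enters with its natural (nonpositive) sign, I set $\mathcal{A}:=-A=\mathcal{A}^{*}\ge 0$ and $\mathcal{B}:=B+\sigma\tau\mathcal{A}$, which is self–adjoint because $A$ and $B$ are; the scheme then reads $\mathcal{B}\,u_t+\mathcal{A}\,u^n=\phi^n$. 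The ``appropriate Hilbert space'' is the energy space $H_{\mathcal A}$ with inner product $(\cdot,\mathcal{A}\cdot)$ (equivalently $H_{\mathcal B}$), and absolute stability is understood as a bound on $\|u^n\|_{\mathcal A}$ uniform in the step $\tau>0$ and in $n$.

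Next I would test the canonical equation with $2\tau u_t=2(u^{n+1}-u^n)$. Using self–adjointness of $\mathcal A$ and the identity $2(\mathcal{A}u^n,u^{n+1}-u^n)=\|u^{n+1}\|_{\mathcal A}^2-\|u^n\|_{\mathcal A}^2-\|u^{n+1}-u^n\|_{\mathcal A}^2$, together with $\|u^{n+1}-u^n\|_{\mathcal A}^2=\tau^2\|u_t\|_{\mathcal A}^2$ and $2\tau(\mathcal{B}u_t,u_t)=2\tau\|u_t\|_{B}^2+2\sigma\tau^2\|u_t\|_{\mathcal A}^2$, one arrives at the energy identity
\[
\|u^{n+1}\|_{\mathcal A}^2-\|u^n\|_{\mathcal A}^2+2\tau\|u_t\|_{B}^2+(2\sigma-1)\tau^2\|u_t\|_{\mathcal A}^2=2\tau(\phi^n,u_t).
\]
In the homogeneous case $\phi^n\equiv 0$, the hypothesis $\sigma\ge\tfrac12$ together with $B\ge 0$ makes every term on the left except the difference $\|u^{n+1}\|_{\mathcal A}^2-\|u^n\|_{\mathcal A}^2$ nonnegative, so $\|u^{n+1}\|_{\mathcal A}\le\|u^n\|_{\mathcal A}$ for every $\tau>0$; this is the asserted unconditional (absolute) stability. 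For $\phi^n\not\equiv 0$ I would estimate $2\tau(\phi^n,u_t)\le \tau\|\phi^n\|_{B^{-1}}^2+\tau\|u_t\|_{B}^2$ by Cauchy–Schwarz and Young, absorbing $\tau\|u_t\|_{B}^2$ into the dissipative term $2\tau\|u_t\|_{B}^2$ (here one uses $B$ positive definite), then sum over $n$ and apply a discrete Gronwall argument to get $\|u^n\|_{\mathcal A}\le C\big(\|u^0\|_{\mathcal A}+\sum_k\tau\|\phi^k\|\big)$ with $C$ independent of $\tau$. Finally I would observe that the whole computation is performed verbatim on the product space $H^{p}$ with the block operators $A=(A_{\alpha\beta})$, $B=(B_{\alpha\beta})$, whose self–adjointness is inherited from that of the blocks, so the estimate also covers the coupled system (\ref{equ1}) and, after taking the diagonal blocks, the sequence of simpler subproblems.

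The main obstacle is not the estimate itself — which is routine once the test function $2\tau u_t$ and the identity above are in place — but fixing the functional setting precisely: one must state that $A$ is used with the sign convention $-A\ge 0$, that $B=B^{*}$ is positive (semi-)definite, and that ``absolute stable in an appropriate Hilbert space'' means stability in the energy norm $\|\cdot\|_{\mathcal A}$ (or $\|\cdot\|_{\mathcal B}$) rather than in the ambient $L^2$ norm; without this the claim is not literally true, because the base–norm amplification factor of the step operator $(B-\sigma\tau A)^{-1}B$ need not be $\le 1$.
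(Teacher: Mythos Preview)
Your argument is correct and is exactly the classical Samarskii energy method for two-layer operator-difference schemes: put the weighted scheme into canonical form, test with $2\tau u_t$, use self-adjointness to produce the $(2\sigma-1)\tau^2\|u_t\|_{\mathcal A}^2$ term, and read off unconditional stability in the $\mathcal A$-norm for $\sigma\ge\tfrac12$. The paper itself does not carry out any of this; its ``proof'' consists solely of a pointer to \cite{vab2011}, and what you have written is precisely the computation one finds there, so your proposal is not an alternative route but rather the actual content behind the citation. Your closing caveat about the sign convention on $A$, the positivity of $B$, and the choice of energy norm is well taken and in fact sharper than anything the paper states.
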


\begin{proof}
The outline of the proof is given in \cite{vab2011}.
\end{proof}

\subsection{Iterative splitting method}

The following algorithm is based on the iteration with
fixed-splitting discretization step-size $\tau$, namely, on the
time-interval $[t^n,t^{n+1}]$ we solve the following sub-problems
consecutively for $i=0,2, \dots 2m$. (cf. \cite{glow03,kan03}.):

\begin{eqnarray}
 && \frac{\partial c_i(t)}{\partial t} = A_1
c_i(t) \; + \; A_2 c_{i-1}(t), \;
\mbox{with} \; \; c_i(t^n) = c^{n} \label{kap3_iter_1} \\
&& \mbox{and} \; c_{0}(t^n) = c^n \; , \; c_{-1} = 0.0 , \nonumber
\\\label{kap3_iter_2}
&& \frac{\partial c_{i+1}(t)}{\partial t} = A_1 c_i(t) \; + \; A_2 c_{i+1}(t), \; \\
&& \mbox{with} \; \; c_{i+1}(t^n) = c^{n}\; , \nonumber
\end{eqnarray}
 where $c^n$ is the known split
approximation at the time-level $t=t^{n}$. The split
approximation at the time-level $t=t^{n+1}$ is defined as
$c^{n+1}=c_{2m+1}(t^{n+1})$. (Clearly, the function $c_{i+1}(t)$
depends on the interval $[t^n,t^{n+1}]$, too, but, for the sake of
simplicity, in our notation we omit the dependence on $n$.)

\smallskip
In the following we will analyze the convergence and the rate of
convergence of the method
(\ref{kap3_iter_1})--(\ref{kap3_iter_2}) for $m$ tends to
infinity for the linear operators
$A_1, A_2: \! {\X} \rightarrow {\X}$,
where we assume that these operators and their sum are
generators of the $C_0$ semi-groups. We emphasize that these
operators are not necessarily bounded, so the convergence is
examined in a general Banach space setting.

The novelty of the convergence results are the reformulation 
in integral-notation. Based on this, we can assume to have bounded
integral operators which can be estimated and given 
in a recursive form. Such formulations are known in the work of
\cite{han08} and  \cite{jan00} and estimations of the kernel part with the
exponential operators are sufficient to estimate the recursive
formulations.

\subsection{Splitting Method to couple mobile and immobile and adsorbed parts}

The motivation of the splitting method are based on the following observations:
\begin{itemize}
\item The mobile phase is semidiscretised with fast finite volume methods and
can be stored into a stiffness-matrix. We achieve large time steps, if we 
consider implicit Runge-Kutta methods of lower order (e.g. implicit Euler) as
a time discretization method. 
\item The immobile, adsorbed and immobile-adsorbed phases are purely
ordinary differential equations and the each cheap to solve with explicit Runge-Kutta schemes.
\item The ODEs can be seen as perturbations and can be solved all explicit 
in a fast iterative scheme.
\end{itemize}

For the full equation we consider the following matrix notation:
\begin{eqnarray}
\label{eq20}
&& \partial_t {\bf c} =  A_1 {\bf c} + A_2 {\bf c} + B_1 ({\bf c} - {\bf c}_{im}) + B_2  ({\bf c} - {\bf c}_{ad}) + {\bf Q} \; , \\
&& \partial_t {\bf c}_{im} =  A_2 {\bf c}_{im} + B_1 ({\bf c}_{im} - {\bf c}) + B_2  ({\bf c}_{im} - {\bf c}_{im,ad}) + {\bf Q}_{im} \; , \\
&& \partial_t {\bf c}_{ad} =  A_2 {\bf c}_{ad} + B_2  ({\bf c}_{ad} - {\bf c}) + {\bf Q}_{ad} \; , \\
&& \partial_t {\bf c}_{im,ad} = A_2 {\bf c}_{im,ad} + B_2  ({\bf c}_{im,ad} - {\bf c}_{im}) + {\bf Q}_{im,ad} \; ,
\end{eqnarray}
where ${\bf c} = (c_1, \ldots, c_m)^T$ is the spatial discretised concentration 
in the mobile phase, see equation (\ref{mobile1}), ${\bf c}_{im} = (c_{1,im}, \ldots, c_{m,im})^T$ is the concentration in the immobile phase, the some also for the other phase concentrations.
$A_1$ is the stiffness matrix of equation (\ref{mobile1}), $A_2$ is the reaction matrix of the right hand side of (\ref{mobile1}), 
$B_1$ and $B_2$ are diagonal matrices with the
exchange of the immobile and kinetic parameters, see equation (\ref{adsorpt1}) and (\ref{adsorpt1_immo}). 

Further  ${\bf Q}, \ldots, {\bf Q}_{im,ad}$ are the spatial discretised sources vectors.

Now we have the following ordinary differential equation:
\begin{eqnarray}
\label{eq20}
&& \partial_t {\bf C} = 
\left(
\begin{array}{c c c c}
 A_1 + A_2 + B_1 + B_2 & -B_1 & -B_2 & 0   \\
 - B_1  & A_2 + B_1 + B_2 & 0 & -B_2     \\
 - B_2  &  0   & A_2 + B_2 & 0     \\
   0 & - B_2 & 0 & A_2 + B_2 
\end{array}
\right)  {\bf C} + \tilde{{\bf Q}} , 
\end{eqnarray}
where $ {\bf C} = ({\bf c}, {\bf c}_{im}, {\bf c}_{ad}, {\bf c}_{im,ad})^T$ 
and the right hand side is given as $\tilde{{\bf Q}} = ({\bf Q}, {\bf Q}_{im}, {\bf Q}_{ad}, {\bf Q}_{im,ad} )^T $.

For such an equation we apply the decomposition of the matrices:
\begin{eqnarray}
\label{eq20}
&& \partial_t {\bf C} = \tilde{A} {\bf C} + \tilde{{\bf Q}} , \\
&& \partial_t {\bf C} = \tilde{A_1} {\bf C}  + \tilde{A_2} {\bf C} + \tilde{{\bf Q}} ,
\end{eqnarray}
where
\begin{eqnarray}
\label{eq20}
\tilde{A_1} =
\left(
\begin{array}{c c c c}
 A_1 + A_2  & 0 & 0 & 0   \\
 0  & A_2 & 0 & 0     \\
 0  &  0   & A_2  & 0     \\
   0 & 0 & 0 & A_2 
\end{array}
\right),
\label{eq20}
\tilde{A_2} =
\left(
\begin{array}{c c c c}
 B_1 + B_2 & -B_1 & -B_2 & 0   \\
 - B_1  & B_1 + B_2 & 0 & -B_2     \\
 - B_2  &  0   &  B_2 & 0     \\
   0 & - B_2 & 0 &  B_2 
\end{array}
\right),
\end{eqnarray}

The equation system is numerically solved by an iterative scheme:
\begin{algorithm}
\label{algo1}
We divide our time interval $[0, T]$ into sub-intervals $[t^n, t^{n+1}]$, where $n=0,1,\dots N$, $t^0=0$ and $t^N=T$.

We start with $n = 0$:

1.) The initial conditions are given with ${\bf C}_{0}(t^{n+1}) = {\bf C}(t^{n})$. We start with $k = 0$.

2.) Compute the fix point iteration scheme given as:
\begin{eqnarray}
\label{eq20_iter}
&& \partial_t {\bf C}^k =  \tilde{A}_1 {\bf C}^k + \tilde{A}_2 {\bf C}^{k-1} + \tilde{{\bf Q}} \; ,
\end{eqnarray}
where $k$ is the iteration index, see \cite{fargei05}.
For the time integration, we apply Runge-Kutta methods as ODE solvers, see \cite{hai92} and \cite{hai96}.

3.) The stop criterion for the time interval $[t^n, t^{n+1}]$ is given as:
\begin{eqnarray}
\label{eq_stop}
&& || {\bf C}^k(t^{n+1}) -  {\bf C}^{k-1}(t^{n+1})|| \le err , 
\end{eqnarray}
where $||\cdot||$ is the maximum norm over all components of the solution
vector. $err$  is a given error bound, e.g. $err = 10^{-4}$.

If equation (\ref{eq_stop}) is fulfilled, we have the result 
\begin{eqnarray}
\label{eq_res}
{\bf C}(t^{n+1}) = {\bf C}^k(t^{n+1}),
\end{eqnarray} 

If $n = N$ then we stop and are done.

If equation (\ref{eq_stop}) is not fulfilled, we do $k = k+1$ and go-to 2.).

\end{algorithm}

The error analysis of the schemes are given in the following Theorem:
\smallskip
\begin{theorem}{\label{Th1}}
Let $A,B \in {\mathcal L(\X)} $ be given linear bounded operators in a Banach space ${\mathcal L(\X)}$. We consider the abstract Cauchy problem:
\begin{eqnarray}
\label{kap3_iter_3}
 \partial_t {\bf C}(t) & = & \tilde{A} {\bf C}(t) + \tilde{B} {\bf C}(t), \quad t_n \leq t \leq t_{n+1} ,  \\
 {\bf C}(t_n) & = & {\bf C}_n,  \; \mbox{for} \; n = 1, \ldots, N ,
\end{eqnarray}
where $t_1 = 0$ and the final time is $t_N = T \in \R^+$.
\noindent
 Then problem {\rm(\ref{kap3_iter_3})} has a unique solution.  For a finite steps with time size $\tau_n = t^{n+1} - t^n$, the iteration
{\rm(\ref{eq20_iter})} for \\
$k=1,2, \dots , q$ is consistent with an order  of consistency ${\mathcal
O} (\tau_n^{q})$.
\end{theorem}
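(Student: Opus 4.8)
The plan is to reduce the statement to the standard convergence theory for iterative operator splitting in the bounded-operator setting, which is exactly the framework sketched after equations (\ref{kap3_iter_1})--(\ref{kap3_iter_2}). First I would establish existence and uniqueness of the solution of (\ref{kap3_iter_3}): since $\tilde A$ and $\tilde B$ are bounded linear operators on the Banach space $\X$, their sum $\tilde A + \tilde B$ is bounded, hence generates a uniformly continuous group $t \mapsto e^{t(\tilde A + \tilde B)}$, and the unique solution on $[t_n, t_{n+1}]$ is ${\bf C}(t) = e^{(t - t_n)(\tilde A + \tilde B)} {\bf C}_n$. This is the routine part and can be dispatched in a sentence, citing the semigroup hypotheses already invoked in the text.

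The core of the argument is the consistency estimate. I would work on a single subinterval $[t^n, t^{n+1}]$ of length $\tau_n$ and rewrite the iteration (\ref{eq20_iter}) in the integral (variation-of-constants) form mentioned as the ``novelty'' of the paper:
\begin{eqnarray}
\label{eq:varconst}
{\bf C}^k(t) = e^{(t-t^n)\tilde A_1} {\bf C}_n + \int_{t^n}^{t} e^{(t-s)\tilde A_1}\big(\tilde A_2 {\bf C}^{k-1}(s) + \tilde{\bf Q}\big)\, ds .
\end{eqnarray}
Writing the local error as $e^k(t) = {\bf C}(t) - {\bf C}^k(t)$ and subtracting the analogous integral identity for the exact solution (with $\tilde A_2 {\bf C}$ in place of $\tilde A_2 {\bf C}^{k-1}$), one gets the recursion $e^k(t) = \int_{t^n}^{t} e^{(t-s)\tilde A_1} \tilde A_2\, e^{k-1}(s)\, ds$. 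Then I would bound the integral operator: with $\|e^{(t-s)\tilde A_1}\| \le e^{\|\tilde A_1\|\tau_n} =: K_1$ and $\|\tilde A_2\| =: K_2$, an induction on $k$ gives $\|e^k(t)\|_\infty \le (K_1 K_2)^k \tfrac{\tau_n^{\,k}}{k!} \max_{[t^n,t^{n+1}]}\|e^0\|$, so starting from $e^0 = {\bf C} - {\bf C}_0$ of size $\mathcal{O}(\tau_n)$ (the initializer ${\bf C}_0 \equiv {\bf C}_n$ agrees with the exact solution to first order), after $k=q$ iterations the local error is $\mathcal{O}(\tau_n^{q+1})$ in the norm, i.e. a one-step method of consistency order $\mathcal{O}(\tau_n^{q})$ in the sense stated.

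The step I expect to be the main obstacle is making the recursion clean despite the inhomogeneity $\tilde{\bf Q}$ and the fact that ${\bf C}^{k-1}(t^n) = {\bf C}_n$ is reimposed at each iterate: one must check that the source term and the common initial condition cancel exactly in the difference $e^k = {\bf C} - {\bf C}^k$, leaving only the $\tilde A_2 e^{k-1}$ coupling under the integral, and that no boundary terms survive. A secondary technical point is tracking the precise power of $\tau_n$ in the base case $e^0$, and confirming that the block structure of $\tilde A_1, \tilde A_2$ in (\ref{eq20}) does not affect the estimate — it does not, since all bounds are in operator norm and the block form only enters through the constants $K_1, K_2$. Once the recursive kernel estimate is in place, summing the geometric-type series and reading off the order is routine; I would close by remarking that the estimate is uniform in $n$, since $K_1, K_2$ depend only on $\|\tilde A_1\|, \|\tilde A_2\|$ and $\max_n \tau_n$.
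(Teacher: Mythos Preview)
Your proposal is correct and follows exactly the route the paper itself signals just before the theorem: rewrite the iterates in integral (variation-of-constants) form, subtract to obtain the recursion $e^k(t)=\int_{t^n}^{t} e^{(t-s)\tilde A_1}\tilde A_2\,e^{k-1}(s)\,ds$, and iterate the kernel bound to pick up one factor of $\tau_n$ per step. The cancellation of $\tilde{\bf Q}$ and of the common initial value that you flagged as the potential obstacle indeed happens cleanly, precisely because both the exact solution and every iterate share the same Duhamel representation with identical data; only the $\tilde A_2$-coupling survives in the difference.

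As for the comparison: the paper does not actually carry out a proof here --- its entire argument is the sentence ``The outline of the proof is given in \cite{geiser_book_09}.'' Your write-up is therefore not an alternative route but a fleshed-out version of the very strategy the surrounding text advertises (bounded integral operators, recursive estimates in the spirit of \cite{han08} and \cite{jan00}). Nothing needs to be changed; if anything, you could trim the closing remarks about the block structure of $\tilde A_1,\tilde A_2$, since, as you note, only the operator norms enter.
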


\begin{proof}
The outline of the proof is given in \cite{geiser_book_09}.
\end{proof}

\section{Numerical Experiments}
\label{num}

In the following, we present to heat-flow problems.

\subsection{Two phase example}

The next example is a simplified real-life problem 
for a multiphase transport-reaction equation.
We deal with mobile and immobile pores in the porous media,
such simulations are given for heat transfers in earth layers.

We concentrate on the computational benefits of a fast
computation of the iterative scheme, given with matrix exponential. \\

The equation is given as:
\begin{eqnarray}
\label{mobile1_1_2}
&& \partial_t c_1  + \nabla \cdot {\bf F} c_1  =  g (- c_{1}  + c_{1,im} ) - \lambda_1 c_1 ,  \; \mbox{in} \; \Omega \times [0, t] , \\
&& \partial_t c_2  + \nabla \cdot {\bf F} c_2  =  g (- c_{2}  + c_{2,im} )  + \lambda_1 c_1 - \lambda_2 c_2  ,  \; \mbox{in} \; \Omega \times [0, t] , \\
&& {\bf F}  = {\bf v}  - D \nabla , \\
\label{immobile1_1_2}
&& \partial_t c_{1,im}   =  g (c_{1}  - c_{1,im} ) - \lambda_1  c_{1,im} ,  \; \mbox{in} \; \Omega \times [0, t] , \\
&& \partial_t c_{2,im}   =  g (c_{2}  - c_{2,im} ) + \lambda_1  c_{1,im} - \lambda_2  c_{2,im} ,  \; \mbox{in} \; \Omega \times [0, t] , \\
&& c_1({\bf x}, t) = c _{1,0}({\bf x}),  c_2({\bf x}, t) = c _{2,0}({\bf x}) , \; \mbox{on} \; \Omega , \\
&& c_1({\bf x}, t) = c_{1,1}({\bf x},t) , c_2({\bf x}, t) = c_{2,1}({\bf x},t) , \;  \mbox{on} \; \partial \Omega  \times [0, t] , \\
&& c_{1,im} ({\bf x}, t) = 0 ,  c_{2,im} ({\bf x}, t) = 0 ,  \;  \mbox{on} \; \Omega , \\ 
&&  c_{1,im}({\bf x}, t) = 0 ,  c_{2,im}({\bf x}, t) = 0 , \;  \mbox{on} \; \partial \Omega  \times [0, t] ,
\end{eqnarray}

In the following we deal with the semidiscretized equation given with the 
matrices:
\begin{eqnarray}
\label{eq20}
&& \partial_t {\bf C} = 
\left(
\begin{array}{c c c c}
 A - \Lambda_1 -G & 0 & G & 0   \\
 \Lambda_1  & A - \Lambda_2 - G & 0 & G     \\
 G  &  0   & -\Lambda_1 - G & 0     \\
   0 & G & \Lambda_1 & -\Lambda_2 - G 
\end{array}
\right)  {\bf C} , 
\end{eqnarray}
where $ {\bf C} = ({\bf c_1}, {\bf c_2}, {\bf c_1}_{im}, {\bf c_2}_{im})^T$,
while ${\bf c_1} = (c_{1,1}, \ldots, c_{1,I})$ is the solution of
the first heat species in the mobile phase in each spatial discretization point (i = 1, \ldots, I), the same is also for the other solution vectors.

We have the following two operators for the splitting method:
\begin{eqnarray}
A & = &  \frac{D}{\Delta x^2}\cdot  \left(\begin{array}{rrrrr}
 -2 & 1 & ~ & ~ & ~ \\
 1 & -2 & 1 & ~ & ~ \\
 ~ & \ddots & \ddots & \ddots & ~ \\
 ~ & ~ & 1 & -2 & 1 \\
 ~ & ~ & ~ & 1 & -2
\end{array}\right) \\[6pt]
 & + & \frac{v}{\Delta x}\cdot \left(\begin{array}{rrrrr}
 1 & ~ & ~ & ~ & ~ \\
 -1 & 1 & ~ & ~ & ~ \\
 ~ & \ddots & \ddots & ~ & ~ \\
 ~ & ~ & -1 & 1 & ~ \\
 ~ & ~ & ~ & -1 & 1
\end{array}\right)~\in~\R^{I \times I}
\end{eqnarray}
where $I$ is the number of spatial points.
\begin{eqnarray}
\Lambda_1 & = &    \left(\begin{array}{rrrrr}
 \lambda_1 & 0 & ~ & ~ & ~ \\
        0 &  \lambda_1 & 0 & ~ & ~ \\
 ~ & \ddots & \ddots & \ddots & ~ \\
 ~ & ~ & 0 & \lambda_1 & 0 \\
 ~ & ~ & ~ & 0 & \lambda_1
\end{array}\right) ~\in~\R^{I \times I}
\end{eqnarray}
\begin{eqnarray}
\Lambda_2 & = &    \left(\begin{array}{rrrrr}
 \lambda_2 & 0 & ~ & ~ & ~ \\
        0 &  \lambda_2 & 0 & ~ & ~ \\
 ~ & \ddots & \ddots & \ddots & ~ \\
 ~ & ~ & 0 & \lambda_2 & 0 \\
 ~ & ~ & ~ & 0 & \lambda_2
\end{array}\right) ~\in~\R^{I \times I}
\end{eqnarray}
\begin{eqnarray}
G & = &    \left(\begin{array}{rrrrr}
 g & 0 & ~ & ~ & ~ \\
        0 &  g & 0 & ~ & ~ \\
 ~ & \ddots & \ddots & \ddots & ~ \\
 ~ & ~ & 0 & g & 0 \\
 ~ & ~ & ~ & 0 & g
\end{array}\right) ~\in~\R^{I \times I}
\end{eqnarray}

We decouple into the following matrices:
\begin{eqnarray}
A_1 & = &  \left(\begin{array}{c c c c }
 A & 0 & 0 & 0\\
 0 & A & 0 & 0 \\
 0 & 0 & 0 & 0 \\
 0 & 0 & 0 & 0 
\end{array}\right) ~\in~\R^{4 I \times 4 I}
\end{eqnarray}
\begin{eqnarray}
\tilde{A}_2 & = &  \left(\begin{array}{c c c c}
 - \Lambda_1 & 0 & 0 & 0\\
 \Lambda_1 & - \Lambda_2 & 0 & 0 \\
 0 & 0 & - \Lambda_1 & 0 \\
 0 & 0 & \Lambda_1 & - \Lambda_2 
\end{array}\right) ~\in~\R^{4 I \times 4 I}
\end{eqnarray}
\begin{eqnarray}
\tilde{A}_3 & = &  \left(\begin{array}{c c c c}
 - G & 0 & G & 0\\
  0 & - G & 0 & G \\
 G & 0 & - G & 0 \\
 0 & G & 0 & - G 
\end{array}\right) ~\in~\R^{4 I \times 4 I}
\end{eqnarray}
For the operator $A_1$ and $A_2 = \tilde{A}_2 + \tilde{A}_3$ we apply the
iterative splitting method.

Based on the decomposition, operator $A_1$ is only tridiagonal
and operator $A_2$ is block diagonal. Such matrix structure 
reduce the computation of the exponential operators.

The Figure \ref{two_phase} present the numerical errors between the exact and the
numerical solution. Here we obtain optimal results for one-side iterative schemes on operator $B$, means we iterate with respect to $B$ and use $A$ as right hand side. 
\begin{figure}[ht]
\begin{center}  
\includegraphics[width=9.0cm,angle=-0]{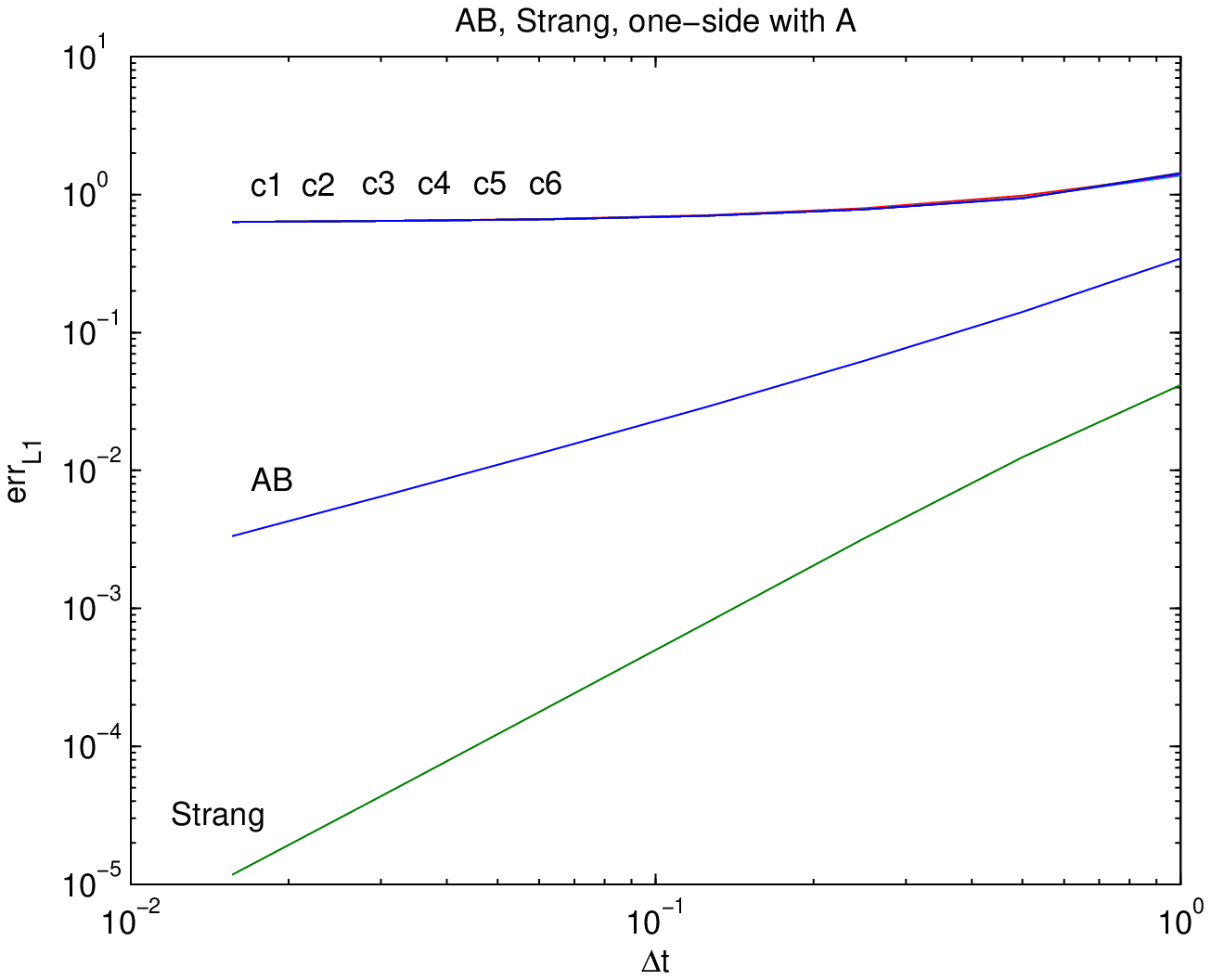} 
\includegraphics[width=9.0cm,angle=-0]{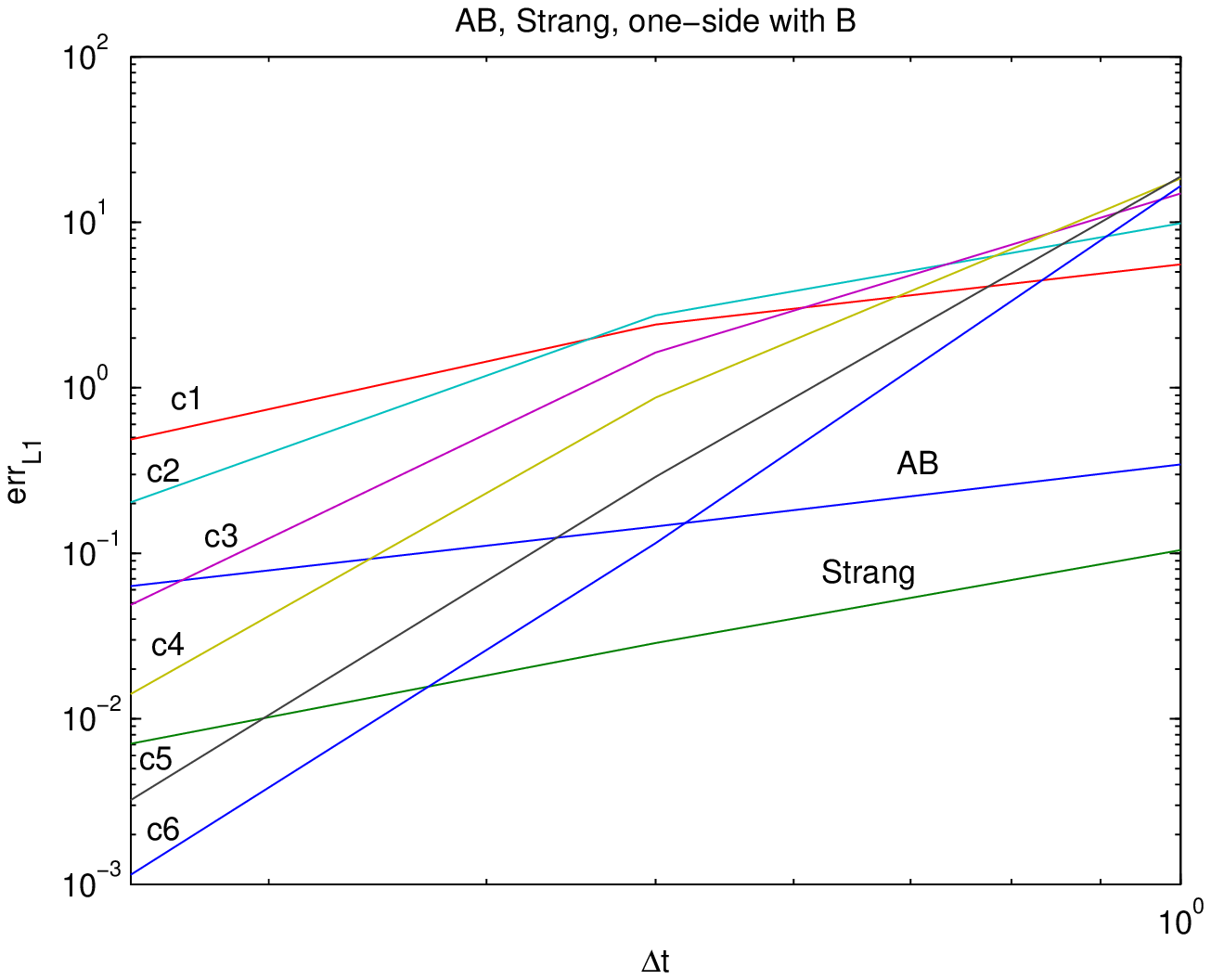} 
\includegraphics[width=9.0cm,angle=-0]{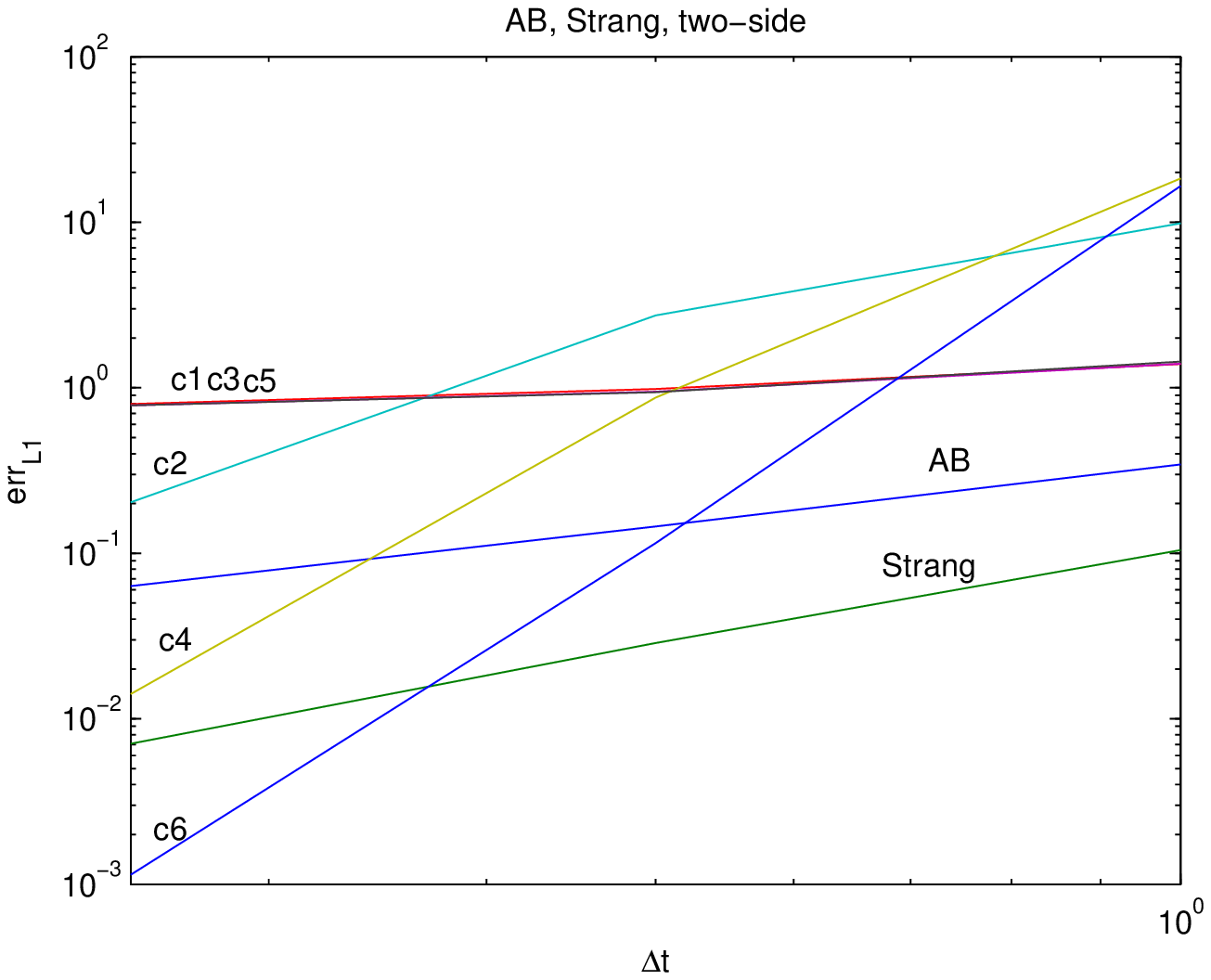} 
\end{center}
\caption{\label{two_phase} Numerical errors of the one-side Splitting scheme with $A$ (upper figure),  the one-side Splitting scheme with $B$ (middle figure) and the
iterative schemes with $1, \ldots, 6$ iterative steps (lower figure).}
\end{figure}

\begin{remark}
For all iterative schemes, we can reach faster results as for the
The iterative schemes with fast computations of the exponential matrices
standard schemes.
With $4-5$ iterative steps we obtain more accurate results as we did
for the expensive standard schemes.
With one-side iterative schemes we reach the best convergence results.
\end{remark}

In the following, we present a multi-layer model in the underlying rock
and assume multiple heat sources.
The aim is to see a distribution of the heat in the upper-lying earth-layers.

\subsection{Parameters of the model equations}

In the following all parameters of the model equations (\ref{mobile1})-(\ref{adsorpt1_immo})
are given in Table \ref{model_par_1}.
\begin{table}[h]
\begin{center}
\begin{tabular}{||c|c||}
\hline
density & $\rho = 1.0$ \\
mobile porosity & $\phi = 0.333$ \\
immobile porosity & $0.333$  \\
Diffusion & $D = 0.0$ \\
longitudinal Dispersion & $\alpha_L = 0.0$ \\
transversal Dispersion & $\alpha_T = 0.00$ \\
Retardation factor & $R = 10.0e-4$  (Henry rate). \\
Velocity field & $\vec{v} = ( 0.0,  4.0 \; 10^{-3})^t$. \\
Decay rate of the 1st heat source & $\lambda_{AB} = 1 \; 10^{-68}$. \\
Decay rate of the 2nd heat source & $\lambda_{AB} = 2 \; 10^{-3}$, $\lambda_{BNN} = 1 \; 10^{-68}$. \\
Decay rate of the 3rd heat source & $\lambda_{AB} = 0.25 \; 10^{-3}$, $\lambda_{CB} = 0.5 \; 10^{-3}$. \\
\hline
\hline
Geometry (2d domain) & $\Omega = [0, 100] \times [0, 100]$. \\
Boundary   & Neumann boundary at  \\
           & top, left and right boundaries. \\
           & Outflow boundary  \\
           & at the bottom boundary \\
\hline
\end{tabular}
\caption{\label{model_par_1} Model-Parameters.}
\end{center}
\end{table}

The discretization and solver method are given as:

For the spatial discretization method, we apply Finite volume methods
 of 2nd order, with the following parameters in Table \ref{model_par_1}.
\begin{table}[h]
\begin{center}
\begin{tabular}{||c|c||}
\hline
spatial step size & $\Delta x_{min} = 1.56,\Delta x_{max} = 2.21$ \\
refined levels & 6 \\
Limiter & Slope limiter \\
Test functions & linear test function \\
               & reconstructed with neighbor gradients \\
\hline
\end{tabular}
\caption{\label{model_par_1} Spatial discretization parameters.}
\end{center}
\end{table}

For the time discretization method, we apply 
Crank-Nicolson method (2nd order), with the following parameters in Table \ref{model_par_2}.
\begin{table}[h]
\begin{center}
\begin{tabular}{||c|c||}
\hline
Initial time-step & $\Delta t_{init} =  5 \; 10^2 $ \\
controlled time-step & $\Delta t_{max} =  1.298 \; 10^2,\Delta t_{min} =  1.158 \; 10^2  $ \\
Number of time-steps & $100,80,30,25$ \\
Time-step control & time steps are controlled with \\
                 & the Courant-Number $\mbox{CFL}_{max} = 1$ \\
\hline 
\end{tabular}
\caption{\label{model_par_2} Time discretization parameters.}
\end{center}
\end{table}

For the discretised equations are solved with the
following methods, see the description in Table \ref{model_par_3}.

\begin{table}[h]
\begin{center}
\begin{tabular}{||c|c||}
\hline
Solver & BiCGstab (Bi conjugate gradient method) \\
Preconditioner & geometric Multi-grid method \\
Smoother & Gauss-Seidel method as smoothers for \\
         & the Multi-grid method \\
Basic level & $0$ \\
Initial grid & Uniform grid with $2$ elements \\
Maximum Level &  $6$ \\
Finest grid & Uniform grid with $8192$ elements \\
\hline
\end{tabular}
\caption{\label{model_par_3} Solver methods and their parameters.}
\end{center}
\end{table}

For the numerical experiments, we discuss the heat flow
of different heat sources in the underlying multiple domain
regime.

The underlying software tool is $r3t$, which was developed to
solve discretised partial differential equations.
We use the tool to solve transport-reaction equations, see \cite{fein04}.

\subsection{Temperatur in an underlying Rock with permeable and less permeable layers}

In the following we discuss the simulation with a porous media 
given in Figure \ref{layer_0}.
The velocity is given in vertical direction, the area of the 
domain is $[0, 100] \times [0, 80]$. 
\begin{figure}[H]
\begin{center} 
\includegraphics[width=14.0cm,angle=-0]{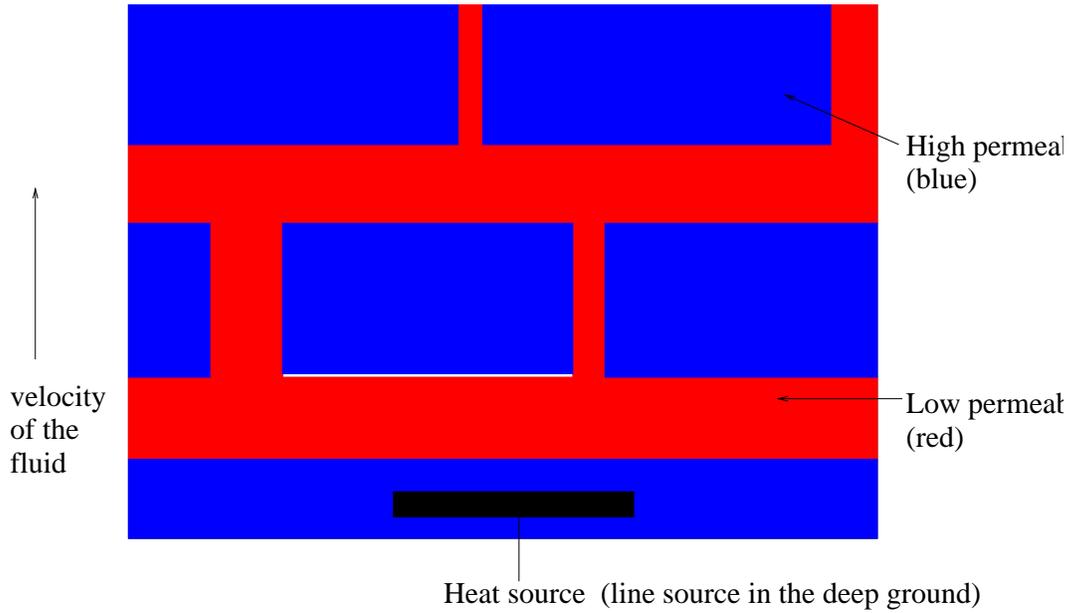}
\end{center}
\caption{\label{layer_0} Multiple layer regime of the underlying rocks and earth layers.}
\end{figure}

In the following Figure \ref{part_1} and \ref{part_2}, we present an example 
of the concentration of three inflow sources $x_{Source 1}, y_{Source 1} = (30, 75)$,
 $x_{Source 2}, y_{Source 2} = (50, 75)$ and  $x_{Source 3}, y_{Source 3} = (70, 75)$. 
The velocity is given perpendicular
in the underlying layers.
\begin{figure}[H]
\begin{center} 
\includegraphics[width=10.0cm,angle=-180]{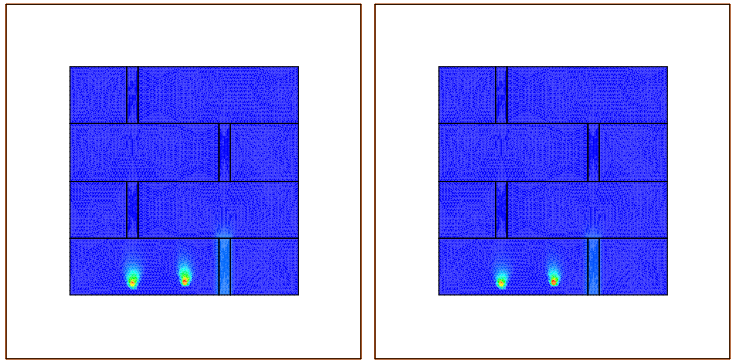}
\end{center}
\caption{\label{part_1} Three inflow sources $x_{Source 1}, y_{Source 1} = (30, 75)$,
 $x_{Source 2}, y_{Source 2} = (50, 75)$ and  $x_{Source 3}, y_{Source 3} = (70, 75)$ with perpendicular velocity and $2$ time-steps (initialization).}
\end{figure}
\begin{figure}[H]
\begin{center} 
\includegraphics[width=10.0cm,angle=-180]{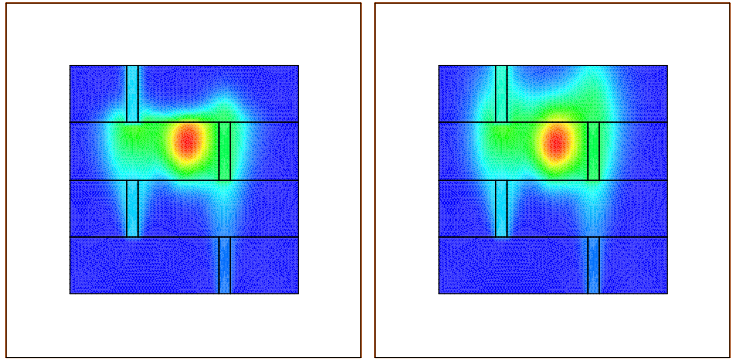} 
\end{center}
\caption{\label{part_2} Three inflow sources $x_{Source 1}, y_{Source 1} = (30, 75)$,
 $x_{Source 2}, y_{Source 2} = (50, 75)$ and  $x_{Source 3}, y_{Source 3} = (70, 75)$ with perpendicular velocity and $150$ time-steps (end phase).}
\end{figure}

\begin{remark}
The numerical experiments can also be fitted to real-life experiments. 
The problems are to achieve the correct diffusion and velocity-drift 
coefficients.
The fare field simulations, we obtain that the temperature derivations
are centered to the middle of the high permeable layers (in our case the
layers with high heat conduction).
Such prognostic results are important to allow an overview, how the
heat flow is distributed in the nearer earth-layers.
\end{remark}

\section{Conclusions and discussions}
\label{concl}

We have presented a continuous model for the multiple phases, we assumed that the
heat flow has a fluid behavior with exchange rates to adsorbed and 
immobile phases based on the different layers.

From the methodology side of the numerical simulations, the contributions were to decouple the
multiphase problem into single phase problems, where each single problem can be solved with more
accuracy.
The iterative schemes allows of coupling the simpler equations and for each additional iterative step, we could reduce the
splitting error.
Such iterative methods allow of accelerating the solver process of multiphase problems.

We can see in the numerical
experiments a loss of the heat transfer to impermeable layer and
strong temperature gradients within permeable layers.

\bibliographystyle{mychicago}
\bibliography{geiser_heat_2012.bib}

\begin{thebibliography}{}

\bibitem[\protect\citeauthoryear{Farago~I}{Farago~I}{2005}]{fargei05}
Farago~I Geiser~J (2005).
\newblock Iterative operator-splitting methods for linear problems.
\newblock Technical Report 1043, Weierstrass Institute for Applied Analysis and
  Stochastics, Berlin, Germany, Mohrenstrasse, Berlin, Germany.

\bibitem[\protect\citeauthoryear{Fein}{Fein}{2004}]{fein04}
Fein E (2004).
\newblock Software package $r^3t$.
\newblock Technical report.

\bibitem[\protect\citeauthoryear{Frolkovi\v{c}}{Frolkovi\v{c}}{2002a}]{fro02}
Frolkovi\v{c} P (2002a).
\newblock Flux-based method of characteristics for contaminant transport in
  flowing groundwater.
\newblock {\em Computing and Visualization in Science\/}~{\em 5\/}(2), pp.
  73--83.

\bibitem[\protect\citeauthoryear{Frolkovi\v{c}}{Frolkovi\v{c}}{2002b}]{fro02_b}
Frolkovi\v{c} P (2002b).
\newblock Flux-based methods of characteristics for transport problems in
  groundwater flows induced by sources and sinks.
\newblock {\em Computational Methods in Water Resources (S.M. Hassanizadeh et
  al.)Volume II., Elsevier, Amsterdam, Boston, Heidelberg,\/}~{\em 2}, pp.
  979--986.

\bibitem[\protect\citeauthoryear{Frolkovi\v{c} and De~Schepper}{Frolkovi\v{c}
  and De~Schepper}{2001}]{frodesch01}
Frolkovi\v{c} P and De~Schepper H (2001).
\newblock Numerical modelling of convection dominated transport coupled with
  density driven flow in porous media.
\newblock {\em Advances in Water Resources\/}~{\em 24}, pp. 63--72.

\bibitem[\protect\citeauthoryear{Frolkovi\v{c} and Geiser}{Frolkovi\v{c} and
  Geiser}{2003}]{fro_gei02}
Frolkovi\v{c} P and Geiser J (2003).
\newblock Discretization methods with discrete minimum and maximum property for
  convection dominated transport in porous media.
\newblock I.~Dimov, I.~Lirkov, S.~Margenov and Z.~Zlatev (eds.), Numerical
  Methods and Applications, 5th International Conference, NMA 2002, Borovets,
  Bulgaria. Berlin, Heidelberg, pp.\  446--453.

\bibitem[\protect\citeauthoryear{Geiser}{Geiser}{2003}]{gei_diss03}
Geiser J (2003).
\newblock {\em Gekoppelte Diskretisierungsverfahren f\"ur Systeme von
  Konvektions-Dispersions-Diffusions-Reaktionsgleichungen.}
\newblock Ph.\ D. thesis, Universit\"at Heidelberg.

\bibitem[\protect\citeauthoryear{Geiser}{Geiser}{2006}]{gei_06}
Geiser J (2006).
\newblock Discretisation methods with analytical solutions for
  convection-diffusion-dispersion-reaction-equations and applications.
\newblock {\em Journal of Engineering Mathematics.\/}~{\em 57\/}(1), pp.
  79--98.

\bibitem[\protect\citeauthoryear{Geiser}{Geiser}{2009}]{geiser_book_09}
Geiser J (2009).
\newblock {\em Decomposition Methods for Differential Equations: Theory and
  Applications}.
\newblock Chapman \& Hall/CRC Numerical Analysis and Scientific Computing
  Series, edited by Magoules and Lai.
\newblock First Edition.

\bibitem[\protect\citeauthoryear{Glowinski}{Glowinski}{2003}]{glow03}
Glowinski R (2003).
\newblock {\em Numerical methods for fluids.}
\newblock Handbook of Numerical Analysis, Gen. eds. P.G.~Ciarlet, J.~Lions,
  Vol.~IX, North-Holland Elsevier, Amsterdam, The Netherlands.

\bibitem[\protect\citeauthoryear{Gobbert and Ringhofer}{Gobbert and
  Ringhofer}{1998}]{gobb96}
Gobbert MK and Ringhofer CA (1998).
\newblock An asymptotic analysis for a model of chemical vapor deposition on a
  microstructured surface.
\newblock {\em SIAM Journal on Applied Mathematics.\/}~{\em 58}, pp. 737--752.

\bibitem[\protect\citeauthoryear{Hairer and Wanner}{Hairer and
  Wanner}{1996}]{hai96}
Hairer E and Wanner G (1996).
\newblock {\em Solving Ordinary Differential Equatons II}.
\newblock SCM, Springer-Verlag Berlin-Heidelberg-New York.
\newblock Second Edition.

\bibitem[\protect\citeauthoryear{Hairer and Wanner}{Hairer and
  Wanner}{1992}]{hai92}
Hairer, E Norsett~SP and Wanner G (1992).
\newblock {\em Solving Ordinary Differential Equatons I}.
\newblock SCM, Springer-Verlag Berlin-Heidelberg-New York.
\newblock Second Edition.

\bibitem[\protect\citeauthoryear{Hansen and Ostermann}{Hansen and
  Ostermann}{2009}]{han08}
Hansen E and Ostermann A (2009).
\newblock Exponential splitting for unbounded operators.
\newblock {\em Mathematics of Computation\/}~{\em 78}.

\bibitem[\protect\citeauthoryear{Jahnke and Lubich}{Jahnke and
  Lubich}{2009}]{jan00}
Jahnke T and Lubich C (2009).
\newblock Error bounds for exponential operator splittings.
\newblock {\em BIT Numerical Mathematics\/}~{\em 40\/}(4), pp. 735--745.

\bibitem[\protect\citeauthoryear{Kanney and Kelley}{Kanney and
  Kelley}{2003}]{kan03}
Kanney, J Miller~C and Kelley CT (2003).
\newblock Convergence of iterative split-operator approaches for approximating
  nonlinear reactive transport problems.
\newblock {\em AAdvances in Water Resources\/}~{\em 26}, pp. 247--261.

\bibitem[\protect\citeauthoryear{Rouch}{Rouch}{2006}]{rouch06}
Rouch H (2006).
\newblock Mocvd research reactor simulation.
\newblock {\em Proceedings of the COMSOL Users Conference 2006 Paris}, Paris,
  France.

\bibitem[\protect\citeauthoryear{Vabishchevich}{Vabishchevich}{2011}]{vab2011}
Vabishchevich PN (2011).
\newblock A new class of additive (splitting) operator-difference schemes.
\newblock {\em Mathematics of Computations\/}~{\em 81\/}(277), pp. 267--276.

\end{thebibliography}

\end{document}